\documentclass[11pt,a4paper]{article}
\usepackage{jheppub}
\usepackage[T1]{fontenc}

\usepackage{mathtools}
\usepackage{tikz}
\usepackage{tikz-cd}
\usepackage[titletoc]{appendix}

\newcounter{counter2}
\numberwithin{equation}{section}
\numberwithin{counter2}{section}

\newtheorem{thm}[counter2]{Theorem}

\newtheorem{lemma}[counter2]{Lemma}

\newcommand{\koniec}{\begin{flushright}  $\Box $ \end{flushright}}

\DeclareSymbolFont{script}{U}{eus}{m}{n}
\DeclareMathSymbol{\Wedge}{0}{script}{"5E}

\newcounter{mnotecount}[section]

\renewcommand{\themnotecount}{\thesection.\arabic{mnotecount}}

\newcommand{\mnote}[1]
{\protect{\stepcounter{mnotecount}}$^{\mbox{\footnotesize
$
\bullet$\themnotecount}}$ \marginpar{
\raggedright\tiny\em
$\!\!\!\!\!\!\,\bullet$\themnotecount: #1} }

\newcommand{\CP}{\mathbb{CP}}                 

\newcommand{\C}{\mathbb{C}}

\newcommand{\R}{\mathbb{R}}

\def\p{\partial}
\def\be{\begin{equation}}

\def\ee{\end{equation}}

\def\bea{\begin{eqnarray}}
\def\eea{\end{eqnarray}}

\title{\boldmath Heavenly equations in de Sitter space}

\author[a]{Maciej Dunajski}
\author[a]{and Timothy Moy}

\affiliation[a]{Department of Applied Mathematics and Theoretical Physics, University of Cambridge,\\ Wilberforce Road, Cambridge CB3 0WA, UK.}

\emailAdd{m.dunajski@damtp.cam.ac.uk}
\emailAdd{tjahm2@cam.ac.uk}

\toccontinuoustrue

\abstract{
We demonstrate that all anti-self-dual Einstein metrics with non--zero cosmological constant 
$\Lambda$ locally arise from solutions of a single second order PDE introduced by Lipstein and Nagy.  We show how this  equation fits into the heavenly formalism of Pleba\'nski, and establish a Lax pair.  Finally we show how Pleba\'nski's second heavenly equation arises in the limit as $\Lambda\rightarrow 0$.
}

\begin{document}
\maketitle
\flushbottom

\begin{center}
{\em In memory of Jerzy Lukierski (1936–2026)}
\end{center}

\section{Introduction}
In the seminal work \cite{Pleb} of Pleba\'nski,  two normal forms for four--dimensional Ricci-flat anti-self-dual  (ASD) complex spacetimes were presented.  The first form is adapted to a local choice of compatible integrable complex structure.  
The second form,  obtained by a  Darboux transformation from the first,  is adapted to a single foliation by self-dual surfaces.  The corresponding equations expressing the ASD Ricci-flat condition are known as the first and second heavenly equations,  respectively. The twistor approach to these equations was developed in \cite{DM02}, and
in \cite{DM24} it was shown how the second equation arises
as an infinitesimal,  but still non-linear,  limit of the first.  
\par 
Przanowski \cite{PrzanKill,PrzanHerm} obtained a normal form and equation for ASD vacuum metrics with non-zero cosmological constant $\Lambda$.  The equation is analogous to the first heavenly equation and the corresponding 
twistor theory was  developed  in  \cite{APV,Hoeg}. 
Finley and Pleba\'nski \cite{FinPleb},  and later Przanowski and Chudecki \cite{CP} have considered normal forms,  and the corresponding equations,  for such spacetimes adapted to a single integrable self-dual foliation.  
More recently,  Lipstein and Nagy \cite{LN} provided an elegant metric ansatz for which the ASD-Einstein equation with $\Lambda=-3$  reduces to a single second order PDE.  Restoring  the explicit dependence of $\Lambda$ the Lipstein--Nagy (LN) 
equation for $W=W(x, y, w, z)$ is
\be\label{wxyz_heavenly}
\Psi\equiv (\p_x\p_w+\p_y\p_z)(\phi^{-1} W)-\phi\{\p_x(\phi^{-1}W), \p_y(\phi^{-1}W)\}_*=0,
\quad \mbox{where}\quad \phi=w-\Lambda x
\ee
and  $\{F, G\}_*=\{F, G\}_{xy}+2\Lambda\phi^{-1}\langle F, G\rangle_y$ with 
\[
\{F, G\}_{xy}\equiv\p_x F \p_y G-\p_x G \p_y F, \quad \langle F, G\rangle_y\equiv F\p_y G-G\p_y F.
\]
The authors of \cite{LN}, perhaps unaware of \cite{FinPleb, CP}  have not demonstrated that
their equation describes  all ASD $\Lambda$--vacua.  Our note aims to fill in this gap. In \S\ref{secproof} we shall prove
\begin{thm}
\label{mainthm}
Let $(X, g)$ be a complexified Einstein manifold with scalar curvature $24\Lambda$ and ASD Weyl tensor. There exist local coordinates  $(x, y, w, z)$ and a function $W$ on $X$ such that
\begin{align}\label{LN_metric}
g = \frac{2}{\phi^2}\left[ dw\,dx + dy\,dz + W_{yy} dw^2 - 2(W_{xy} + \frac{2\Lambda}{\phi}W_y ) 
dw\,dz +  ( W_x + \frac{4\Lambda}{\phi} W )_x \ dz^2  \right]
\end{align}
and $W$ satisfies (\ref{wxyz_heavenly}).  Conversely,  $W$ satisfying (\ref{wxyz_heavenly}) yields ASD $\Lambda$--Einstein metric (\ref{LN_metric}).  
\end{thm}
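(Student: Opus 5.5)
The plan is to start from the structure of ASD $\Lambda$-Einstein metrics. These are the complexified quaternionic-Kähler spaces with vanishing self-dual Weyl curvature (with the paper's orientation conventions the ASD Weyl tensor is the non-vanishing part). For such a metric the bundle of $\alpha$-surfaces (self-dual null surfaces) is integrable, so $X$ is foliated by a two-parameter family of totally null self-dual surfaces. I will choose a single such foliation, given by a twistor-type distribution $\{\pi^{A'}\nabla_{AA'}\}$ for a fixed projective spinor field $\pi_{A'}$. Einstein with $\Lambda\neq 0$ implies that $\pi_{A'}$ can be scaled so that $\pi^{A'}\nabla_{AA'}\pi_{B'}\propto \pi_{B'}$ with a nonzero expansion proportional to $\Lambda$. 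The metric is then not Kähler along the leaves, but it is conformal to one that is adapted to the foliation. This is the Finley--Plebański / Chudecki--Przanowski hyperheavenly setup with a vanishing self-dual (in the paper's sense) Weyl part.

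Concretely, I will pick coordinates $(w,z)$ constant on the leaves, which is possible because the distribution is integrable, together with coordinates $(x,y)$ along the leaves. I will write the self-dual 2-form $\Sigma=\pi_{A'}\pi_{B'}\Sigma^{A'B'}$ defining the foliation. The Einstein condition with $\Lambda\neq0$ forces $d\Sigma=\theta\wedge\Sigma$ with a specific one-form $\theta$. In the hyperheavenly framework the conformal factor can be normalised to $\phi^{-2}$, where $\phi$ is a function constant on the leaves up to a linear $x$-dependence. The freedom of leaf-preserving diffeomorphisms is then used to set $\phi=w-\Lambda x$ and to bring the conformally rescaled metric $\tilde g=\phi^2 g/2$ to the form $dw\,dx+dy\,dz+A\,dw^2+2B\,dw\,dz+C\,dz^2$. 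This is the standard Plebański reduction for a null foliation of self-dual surfaces.

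Next I impose the remaining Einstein equations. These are first order integrability conditions on $A,B,C$. I expect them to be solved by introducing a potential $W$, as in the passage from the second heavenly equation: the conditions $R_{ab}=\Lambda g_{ab}$ restricted to leaf directions express $A,B,C$ as second derivatives of $W$ with the $\Lambda/\phi$ corrections exactly as in (\ref{LN_metric}), after gauge freedom (adding terms linear in $x,y$ to $W$) absorbs integration functions. The remaining Einstein components then reduce to $\partial(\Psi)=0$ for suitable derivatives $\partial$ along the leaves. The residual gauge freedom lets me set $\Psi=0$, which is (\ref{wxyz_heavenly}).

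For the converse I plan a direct computation. Take the null tetrad $e^1=dw$, $e^2=dz$, $e^3=dx+W_{yy}dw-(\dots)dz$, $e^4=dy+\dots$, scaled by $\sqrt2/\phi$, and compute the connection via Cartan's structure equations in spinor form. I will then check that the self-dual curvature is $\Lambda$ times the identity and the trace-free Ricci tensor vanishes precisely modulo $\Psi$ and its derivatives. Only the ASD Weyl part is left unconstrained, so the metric is ASD $\Lambda$-Einstein. The \textbf{main obstacle} will be the gauge-fixing step: showing that the integration functions arising when integrating the Einstein equations, and the function on the right-hand side of $\partial\Psi=\ldots$, can be absorbed by coordinate changes and by shifting $W$ in a way compatible with the fixed choice $\phi=w-\Lambda x$. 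To handle this, I will first classify the residual transformations preserving the form (\ref{LN_metric}), namely $W\mapsto W+\phi^{a}(\alpha x+\beta y+\gamma)$ with $\alpha,\beta,\gamma$ functions of $(w,z)$, and then match them against the free functions that appear.
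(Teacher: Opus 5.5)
Your converse (a direct curvature computation on a tetrad) is fine in principle, but the forward direction has a genuine gap. You mention the vanishing self-dual Weyl part when describing the setup. After using it to obtain one foliation by self-dual null surfaces, however, you only ever impose Einstein equations, and then claim that what remains, $\partial(\Psi)=0$, can be gauged to $\Psi=0$. That cannot work. Einstein spaces with a single such foliation are the hyperheavenly spaces, which include non-ASD ones (e.g.\ type D), whereas by the converse every solution of (\ref{wxyz_heavenly}) is ASD.

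Concretely, in the paper's notation, fix $\phi=\Lambda J_A\theta^A+K_Az^A$ and $W_A=\partial W/\partial\theta^A-4\Lambda\phi^{-1}J_AW$. The remaining Einstein equations are then just (\ref{nonlinear}), so $\phi^{-1}\Psi$ is affine in the leaf coordinates with three free functions of $z^A$. At fixed coordinates the only shifts of $W$ preserving the metric are $W\mapsto W+\rho\,\phi^4+\sigma\,\phi$, with $\rho,\sigma$ functions of $z^A$ alone. Your family $\phi^a(\alpha x+\beta y+\gamma)$ is too large: e.g.\ $W\mapsto W+\alpha x\,\phi^4$ changes the $dz^2$ coefficient by $-4\Lambda\alpha\phi^3$. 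Of the allowed shifts, $\sigma$ does not change $\Psi$, and $\rho$ shifts it only by $-3\Lambda\phi^2J^A\partial\rho/\partial z^A-6\Lambda\phi\rho$. For instance, $\Psi=\phi\,b(z)$ with $J^A\partial b/\partial z^A\neq 0$ satisfies (\ref{nonlinear}) and (\ref{I}) but violates (\ref{II}). Such a metric is Einstein but not ASD, and no gauge choice removes it.

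The missing ingredient is the ASD condition imposed as equations. The paper builds it in from the start by using the DFK normal form (\ref{general_integrable}) with two functions $W_A$, for which anti-self-duality is equivalent to (\ref{ASD2}). Your three-function form with $A,B,C$ is essentially the general form for any metric admitting such a foliation, so it carries no ASD information. Via the identity (\ref{Box_calc}), the conditions (\ref{ASD2}) become (\ref{I}) and (\ref{II}). Only together with (\ref{nonlinear}) do they force $\Psi=3\Lambda\phi^2J^A\partial F/\partial z^A+6\Lambda\phi F$, which is exactly the $\rho$-orbit and can therefore be set to zero.

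Your hyperheavenly route can be repaired by imposing these Weyl conditions explicitly, which kills the hyperheavenly structure functions. As written, though, the step ``the residual gauge freedom lets me set $\Psi=0$'' fails. Relatedly, not every integration function is gauge: $\gamma(z)$ in Lemma \ref{Lipstein_gauge_fix_prop} is removed only by the trace condition $R=24\Lambda$. Note that this means $R_{ab}=6\Lambda g_{ab}$, not $R_{ab}=\Lambda g_{ab}$.
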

Our proof will demonstrate that the equation of Chudecki and Przanowski in \cite{CP} and the LN equation arise from a different gauge choice applied to the master dispersionless integrable system \cite{DFK} describing all ASD conformal structures.

In \S\ref{seclax} we shall find a Lax representation $[L_0, L_1]=0$ for (\ref{wxyz_heavenly}) given by
\begin{eqnarray}
\label{laxpairintro}
L_0 &=& \partial_{w}+(W_{x} + \frac{4\Lambda}{\phi} W)_y\partial_y - W_{yy}\partial_x - \lambda \partial_y + \beta_0 \partial_{\lambda}\\ 
L_1 &=& \partial_{z} -(W_x+\frac{4\Lambda}{\phi}W)_x \partial_y+W_{xy}\partial_{x} + \lambda \partial_{x} + \beta_1 \partial_{\lambda} \nonumber
\end{eqnarray}
with $\beta_0=-\Box W_y, \beta_1=\Box (W_x+4\Lambda\phi^{-1}W)-3\Lambda\Psi$
and
\be
\label{boxoperator}
\Box=\p_w\p_x+\p_z\p_y-W_{yy}\p_x^2-(W_x+4\Lambda\phi^{-1}W)_x\p_y^2+ 2(W_x+2\Lambda\phi^{-1}W)_y\p_x\p_y.
\ee
In \S\ref{seclimit} we show how taking $\Lambda\rightarrow 0$ combined with a coordinate transformation reduces (\ref{wxyz_heavenly}) to the second heavenly equation. In \S\ref{secexamples} we give some examples of solutions to (\ref{wxyz_heavenly}).
\section*{Acknowledgements}
We are grateful to Adam Kmec for pointing out equation (\ref{wxyz_heavenly}) to us,
and to Adam Chudecki and Maciej Przanowski for explaining their work \cite{CP}. 
We thank the anonymous referees for
their detailed comments which have lead to improvements in the manuscript.
Our research was supported by the Simons Foundation
grant  SFI-MPS-T-Institutes-00010825,  and by the State Treasury funds as part of a
task commissioned by the Minister of Science and Higher Education under the project Organization of the Simons Semesters at the Banach Center - New Energies in 2026-2028 (MNiSW/2025/DAP/491). T.M.  is supported by Cambridge Australia Scholarships.  
\section{The master equations of anti-self-duality}
We will use the Einstein-summation convention throughout,  using spinor indices taking values $A \in \{0,1\}$ and $\epsilon^{AB},  \epsilon_{AB}$ the standard symplectic matrices satisfying $\epsilon_{AB}\epsilon^{BC} = \delta_{A}{}^{B}$.  We will raise and lower indices using these objects.  So,  for example 
$W^{A} := \epsilon^{AB}W_{B}$.  

An adapted local Pleba\'nski coordinate system  $z^A=(w, z)$ and $\theta_A=(x, y)$
for the general ASD conformal structure $(X, [g])$ was obtained in \cite{DFK}.  In such coordinates
\begin{eqnarray}\label{general_integrable}
g &=& \frac{2}{\phi^2}(d \theta_{A} dz^{A} + \frac{\partial W_{A}}{\partial \theta^{B}}dz^{A} dz^{B}),\\
&=& \frac{2}{\phi^2}(e^{00'} \odot e^{11'} - e^{10'} \odot e^{01'}),\nonumber
\end{eqnarray}
for some holomorphic functions $W^{A}(z, \theta)$,   $\phi(z,\theta)$.  Here $e^{AA'}$ is the basis
of $\Lambda^{1}(X)$  satisfying $e^{BB'}(E_{AA'}) = \delta_{B}{}^{A}\delta_{B'}{}^{A'}$ and
\be
\label{Evectors}
E_{A0'} = \frac{\partial}{\partial \theta^{A}},\quad
E_{A1'} = \frac{\partial}{\partial z^{A}} + \epsilon^{BC}\frac{\partial W_{B}}{\partial \theta^{A}}\frac{\partial}{\partial \theta^{C}}.
\ee
We have 
 \begin{align}
 [E_{A0'},E_{B1'}] =  -\frac{\partial^2 W^{C}}{\partial \theta^{A} \partial \theta^{B}}E_{C0'}, \quad 
 [E_{A1'},E_{B1'}] = -2\Box W_{[A} E_{B]0'}\label{brac2} ,  
 \end{align}
where the second-order differential operator $\Box :=  E_{11'}E_{00'}-E_{01'}E_{10'}$.  
The full ASD condition requires imposing two additional PDE:
\begin{align}
E_{A0'}(\square W^{A}) = 0, \quad 
E_{A1'}(\square W^{A}) = 0 \label{ASD2}.
\end{align}
There is considerable coordinate freedom on offer here.  Firstly,  we may choose to adapt the coordinates to any self-dual foliation\footnote{Anti-self-duality is equivalent to the existence of a one-parameter family of such foliations.},  which then takes the explicit form
$
{\mathcal D} = \operatorname{span}\big\{ \frac{\partial}{\partial \theta^{0}}, \frac{\partial}{\partial \theta^{1} }\big\}.  
$
Then,  for local coordinates $\tilde{z}^{A}$ on the leaf space $X / {\mathcal D}$,  we may take as coordinates on $X$ 
\begin{align}\label{fibre_lift}
\tilde{z}^{A}(z),  \quad \tilde{\theta}^{A} = \Gamma(z)\frac{\partial \tilde{z}^{A}}{\partial z^{B}}\theta^{B} +\gamma^A(z),  
\end{align}
for arbitrary holomorphic functions $\gamma^{A}(z),  \Gamma(z)$.  The metric takes the form (\ref{general_integrable}) and the form of the ASD equations is preserved after 
redefining $\phi$ and $W^{A}$ appropriately.  In particular,  $\phi$ transforms as 
$
\tilde{\phi} = \Gamma^{1/2}(z)\Delta^{1/2}\phi$ where $\Delta := \det \frac{\partial \tilde{z}^{A}}{\partial z^{B}} .  
$
\section{A normal form for an ASD Einstein metric}
\label{secproof0}
\begin{lemma}\label{Lipstein_gauge_fix_prop}
Given an ASD Einstein metric with $\Lambda := R/24\neq 0$,  we may choose Pleba\'nski coordinates $(z^{A},  \theta^{A})$ so that there exists a function $W=W(z, \theta)$ such that 
\begin{align}\label{W_potential}
\phi = {\Lambda} J_{A}\theta^{A} + K_{A}z^{A}, \quad  W_{A} = \frac{\partial W}{\partial \theta^{A}} - \frac{4\Lambda}{\phi}J_{A}W.
\end{align}
where constants $J_{A},  K_{A}$ ($A=0, 1$) that satisfy $J^{A}K_{A} = 1$ may be chosen arbitrarily.   
\end{lemma}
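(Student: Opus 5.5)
The plan is to start from the general ASD conformal structure (\ref{general_integrable}) in Pleba\'nski coordinates adapted to a self-dual foliation ${\mathcal D}$, and impose the Einstein condition on the representative $g=\phi^{-2}\hat g$, where $\hat g = 2(d\theta_A dz^A + \partial_B W_A\, dz^A dz^B)$. Since $\hat g$ has ASD Weyl tensor, $g$ is Einstein precisely when the trace-free Ricci tensor of $\phi^{-2}\hat g$ vanishes. By the conformal transformation law, this reads
\[
\hat\nabla_a\hat\nabla_b \phi^{-1}\big|_{\rm tf} + \phi^{-1}\hat\Phi_{ab}=0
\]
(up to conventions), where $\hat\Phi_{ab}$ is the trace-free Ricci tensor of $\hat g$. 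I will write this in the frame $E_{AA'}$ of (\ref{Evectors}) and split it by primed-index symmetry into its $0'0'$, $0'1'$ and $1'1'$ parts. The connection coefficients of $\hat g$ needed for this are read off from the brackets (\ref{brac2}). The feature that makes the computation tractable is that the frame vectors $E_{A0'}=\partial_{\theta^A}$ commute and $\hat g$ is flat along the leaves of ${\mathcal D}$.

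\textbf{Step 1 ($0'0'$ component).} This component contains no $W$-dependence beyond what vanishes, and gives $\partial_{\theta^A}\partial_{\theta^B}\phi=0$. Hence
\[
\phi = a_A(z)\theta^A + b(z).
\]

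\textbf{Step 2 (normalising $\phi$).} I use the residual freedom (\ref{fibre_lift}), under which $\tilde\phi=\Gamma^{1/2}\Delta^{1/2}\phi$. The coefficients $a_A(z)$ transform essentially like a covector on the leaf space, rescaled by $\Gamma$. Choosing the coordinates $\tilde z^A$ and the factor $\Gamma$ appropriately, I aim to make $a_A=\Lambda J_A$ constant, and then use $\gamma^A$ to make $b=K_Az^A$. I expect that these choices are only possible after using the $0'1'$ equations, which should force $b$ to be at most linear in the $z$-coordinates once $a_A$ is constant. The scalar curvature $R=24\Lambda$ should then give the normalisation $J^AK_A=1$. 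In that normalisation the combination $\epsilon^{AB}\partial_{\theta^A}\phi\,\partial_{z^B}\phi$ appears with coefficient $\Lambda$; the remaining constants are free up to an $SL(2)$ action, which is why $J_A$ and $K_A$ may be chosen arbitrarily subject to $J^AK_A=1$.

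\textbf{Step 3 ($0'1'$ component).} Once $\phi$ is linear, this component becomes a first-order condition on $W_A$. The expectation is that it is equivalent to
\[
\epsilon^{AB}\partial_{\theta^A}\!\left(\phi^{-4}W_B\right)=f(z).
\]
Note that $\phi^{4}\partial_{\theta^A}(\phi^{-4}W)=\partial_{\theta^A}W-4\Lambda\phi^{-1}J_AW$, because $\partial_{\theta^A}\phi=\Lambda J_A$. The function $f(z)$ is removed by the freedom $W_A\mapsto W_A+\partial_{\theta^A}(\cdots)$, which preserves the metric (\ref{general_integrable}) together with the remaining coordinate freedom. Once $f=0$, the Poincar\'e lemma in the $\theta$-variables gives $\phi^{-4}W_A=\partial_{\theta^A}(\phi^{-4}W)$, and this is exactly (\ref{W_potential}).

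\textbf{Step 4 ($1'1'$ component).} The remaining component, together with (\ref{ASD2}), should then reduce to the single equation (\ref{wxyz_heavenly}). That belongs to the theorem rather than the lemma, so for the lemma it is not needed.

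The main obstacle is the explicit computation of $\hat\Phi_{ab}$ and the Hessian of $\phi^{-1}$ in this non-coordinate frame. A further delicate point is bookkeeping: checking that the gauge choices made in Steps 2 and 3 do not conflict with one another. I would first verify everything in a simple case, $W_A=0$ with flat $\hat g$, to fix signs and factors such as the $4$ and the normalisation of $\Lambda$.
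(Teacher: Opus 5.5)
There is a genuine gap, in Step 3, and it sits exactly where the scalar curvature is needed.

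\textbf{Steps 1--2.} Your Step 1 and the coordinate part of Step 2 agree with the paper. The $e^{A0'}\odot e^{B0'}$ equations give $\phi=\Lambda S_A(z)\theta^A+T(z)$. An integrating factor supplied by $\Gamma$, together with a choice of $\tilde z^A$, makes the coefficient of $\theta^A$ equal to $\Lambda J_A$. The translations $\gamma^A(z)$ then finish the job, and this last step needs no field equation. The shift $\theta^A\mapsto\theta^A+\gamma^A(z)$ changes $T$ by $-\Lambda J_A\gamma^A(z)$, which is an arbitrary function of $z$ since $J_A\neq 0$. So $T=K_Az^A$ with $J^AK_A=1$ is a pure gauge choice. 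Consequently the $0'1'$ equations cannot ``force $b$ to be linear''. Nor can you read $J^AK_A=1$ off the scalar curvature from $\phi$ alone, because $R$ also depends on $W_A$ (through $E_{A1'}\phi$, $\hat\Box\phi$ and $\hat R$).

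\textbf{Step 3.} Once $\phi$ is normalised, the $e^{A0'}\odot e^{B1'}$ equations are $\partial_{\theta^A}\partial_{\theta^B}V=0$, where $V=\epsilon^{AB}\phi^5\partial_{\theta^B}(\phi^{-4}W_A)$. So $V$ is an arbitrary affine function of $\theta^A$, carrying three free functions of $z$; your equation would instead make $V$ proportional to $\phi^5 f(z)$.

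Your gauge step also fails. Adding a gradient $\partial_{\theta^A}h$ to $W_A$ changes the symmetric part of $\partial_{\theta^B}W_A$, so it does not preserve the metric unless $h$ is affine in $\theta$. The true freedom is $W_A\mapsto W_A+\chi(z)\epsilon_{AB}\theta^B+\zeta_A(z)$. It shifts $V$ only by a multiple of $\chi J_A\theta^A$ plus an arbitrary function of $z$.

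After gauge fixing you are therefore left with $V=4\gamma(z)K_A\theta^A$, that is,
\[
W_A=\partial_{\theta^A}W-4\Lambda\phi^{-1}J_AW-\Lambda^{-1}\gamma(z)K_AK_B\theta^B,
\]
and $\gamma(z)$ cannot be gauged away. The missing step is to compute the scalar curvature in this gauge, $R=24\Lambda(1-\gamma)$, so that $R=24\Lambda$ forces $\gamma=0$. Only then is $\phi^{-4}W_A$ closed in $\theta$, and the potential form (\ref{W_potential}) follows. In your plan the scalar curvature has already been spent on a normalisation that is actually gauge, so nothing removes $\gamma$.
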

\noindent
{\bf Proof.}
The proof is based on computations of the Ricci tensor \cite{FinPleb, Timthesis}.
The $e^{A0'} \odot e^{B0'}$ components of the trace-free Ricci tensor vanish if and only if 
\begin{align}
\label{phiformula}
\phi = {\Lambda} S_{A}(z)\theta^{A} + T(z)
\end{align}
for some functions $S_{A}(z),  T(z)$,  where the factor of $\Lambda$ has been inserted for convenience and is consistent with the fact that the scalar curvature vanishes if $\phi$ is a function of $z^A$.  

We rescale the $\theta^{A}$ coordinates by a function of $z^{A}$,  which rescales $\phi$ so that the $1$-form
$S_{A}(z)dz^{A}$
is closed.  This implies there exist functions $\tilde{z}^{A}(z)$ that solve
$
J_{A}\frac{\partial \tilde{z}^{A}}{\partial z^{B}} = S_{B}(z)
$
for some constants $J_{A}$ (not all zero) that we may choose.  We then define $\tilde{\theta}^{A}$ by (\ref{fibre_lift}) with $\gamma^{A}(z) = 0$ so 
$
\tilde{\phi} = \Gamma^{1/2}\Delta^{1/2}(\Gamma^{-1}\Lambda J_{A}\tilde{\theta}^{A} + T(\tilde{z})).  
$
Setting $\Gamma = \Delta$ gives 
$
\tilde{\phi} = {\Lambda}{J}_{A}\tilde{\theta}^{A} + \tilde{T}(\tilde{z})
$ for $\tilde{T}(\tilde{z}) = \Delta T(z)$.  
Lastly we add functions $\gamma^{A}(z)$ to $\tilde{\theta}^{A}$ to put $\phi$ in the form (\ref{W_potential}).  \par 
 We continue imposing the Einstein equations.  The vanishing of the $e^{A0'} \odot e^{B1'}$ components of the trace-free Ricci tensor yield linear PDE
\begin{align}\label{integrability_1}
\frac{\partial^2 V}{\partial \theta^{A} \partial \theta^{B}} = 0,  \quad
\mbox{where}\quad
V := \epsilon^{AB} \phi \frac{\partial W_{A}}{\partial \theta^{B}} - 4{\Lambda} J^{A}W_{A}
= \epsilon^{AB}\phi^{5}\frac{\partial}{\partial \theta^{B}}(\phi^{-4}W_{A}).  
\end{align}
We have freedom to redefine the functions $W_A$ by
$
\tilde{W}_{A} = W_{A} + \chi(z)\epsilon_{AB}\theta^{B} + \zeta_A(z),  
$
for arbitrary $\chi(z),\zeta_{A}(z)$.  In terms of $V$,  since it satisfies (\ref{integrability_1}),  this means we may take 
\begin{align}\label{integrability condition}
V = 4\gamma(z)K_{A}\theta^{A} + \alpha(z) J_{A}\theta^{A} + \beta (z)
\end{align}
for arbitrary $\alpha(z),  \beta(z)$,  while $\gamma(z)$ is fixed.  
If we take $\alpha = \beta = 0$,   
then the integrability condition (\ref{integrability_1}) implies the existence of a function $W$ such that
\begin{align}
W_{A} = \frac{\partial W}{\partial \theta^{A}} - \frac{4\Lambda}{\phi}J_{A}W - \frac{\gamma(z)}{\Lambda}K_{A}K_{B}\theta^{B}.  
\end{align}
Computing the scalar curvature
$
R = 24\Lambda(1-\gamma(z))
$
gives $\gamma(z) = 0$.  \koniec
\section{The Lipstein-Nagy equation}
\label{secproof}
Having obtained a normal form for the metric depending on only one function $W$,  we will now show that the equation of Lipstein and Nagy is necessary and sufficient (after an appropriate redefinition of $W$) for the metric to satisfy the equations (\ref{ASD2}) and the parts of the Einstein condition which we have yet to impose.  
\vskip5pt
\noindent
{\bf Proof of Theorem \ref{mainthm}.}
The first step is to calculate:  
\begin{align}\label{Box_calc}
\square W_{A} = \phi \frac{\partial \Psi}{\partial \theta^{A}} - 3\Lambda J_{A}\Psi + K^{B}\frac{\partial^2 \tilde{W}}{\partial \theta^{B} \partial \theta^{A}} - \Lambda J^{B}\frac{\partial^2 \tilde{W}}{\partial z^{B} \partial \theta^{A}} - 2 \Lambda^2 J^{B}J^{C}\frac{\partial \tilde{W}}{\partial \theta^{B}}\frac{\partial^2 \tilde{W}}{\partial \theta^{A} \partial \theta^{C}}
\end{align}
where
\begin{align}
\Psi := \epsilon^{AB}\frac{\partial^2 \tilde{W}}{\partial \theta^{A} \partial z^{B}} - \frac{\phi}{2}\epsilon^{AB}\epsilon^{CD} \frac{\partial^2 \tilde{W}}{\partial \theta^{A} \partial \theta^{C}}\frac{\partial^2 \tilde{W}}{\partial \theta^{B} \partial \theta^{D}} {-} 2 \Lambda J^{C}\epsilon^{AB}\frac{\partial \tilde{W}}{\partial \theta^{A}}\frac{\partial^2 \tilde{W}}{\partial \theta^{C}\partial \theta^{B}}, \quad \tilde{W}\equiv \phi^{-1}W.
 \end{align}
All but one term of $\square W_{A}$ is a total derivative with respect to the $\theta^{A}$ coordinates.  Therefore the first equation in (\ref{ASD2}) simplifies to the third-order PDE
\begin{align}\label{I}
J^{A}\frac{\partial \Psi}{\partial \theta^{A}} = 0.  
\end{align}
Given the normal form as in Lemma \ref{Lipstein_gauge_fix_prop}, the non-zero components of the Einstein equations are the $e^{A1'} \odot e^{B1'}$ components,  which are equivalent to
\begin{align}\label{nonlinear}
\frac{\partial^2 (\phi^{-1}\Psi)}{\partial \theta^{A} \partial\theta^{B}} = 0.  
\end{align}
Using (\ref{Box_calc}) as well as (\ref{I}) and (\ref{nonlinear})  we expand the remaining ASD equation (\ref{ASD2}) as:
\begin{align}\label{II}
\phi \epsilon^{AB}
\frac{\partial^2\Psi}
     {\partial z^A\partial\theta^B}
-2K^A\frac{\partial\Psi}{\partial\theta^A}
-2\Lambda J^A\frac{\partial\Psi}{\partial z^A}=0
\end{align}
So $\Psi = 0$ implies anti-self-duality and the Einstein condition.  

 The converse is as follows:
The function $W$ satisfying (\ref{W_potential}) is only defined up to
$
\hat{W} = W + \rho(z)\phi^{4}.  
$
Under such a redefinition
\begin{align}\label{freedom}
\hat{\Psi} = \Psi - {3\Lambda}\phi^{2}J^{A}\frac{\partial \rho(z)}{\partial z^{A}} - 6 \Lambda \phi\rho(z).  
\end{align}
The general solution to the system (\ref{I},  \ref{II},  \ref{nonlinear}) is 
\begin{align}
\Psi = 3\Lambda\phi^{2}J^{A}\frac{\partial F(z)}{\partial z^{A}} + 6\Lambda \phi F(z)
\end{align}
for arbitrary $F(z)$.  So we may,  by redefining $\Psi$ as per (\ref{freedom}) set $\Psi = 0$.  The end result of the analysis is  that the metric (\ref{general_integrable}) with $(\phi, W_A)$ as in Lemma \ref{Lipstein_gauge_fix_prop}  such that
\begin{align}\label{LN equation}
\epsilon^{AB}\frac{\partial^2 \tilde{W}}{\partial \theta^{A} \partial z^{B}} - \frac{\phi}{2}\epsilon^{AB}\epsilon^{CD} \frac{\partial^2 \tilde{W}}{\partial \theta^{A} \partial \theta^{C}}\frac{\partial^2 \tilde{W}}{\partial \theta^{B} \partial \theta^{D}} {-} 2 \Lambda J^{C}\epsilon^{AB}\frac{\partial \tilde{W}}{\partial \theta^{A}}\frac{\partial^2 \tilde{W}}{\partial \theta^{C}\partial \theta^{B}} = 0 
\end{align}
is ASD and Einstein.  Conversely,  for any ASD Einstein metric,  there exist  coordinates $(z^{A},\theta^{A})$ such that $\phi$ and $W_{A}$ are as above,  and $\Psi = 0$ holds.  
Setting $z^A=(w, z), \theta^A=(y, -x), J_{A}=(0, 1),  K_{A}=(1, 0)$ reduces
$\Psi=0$ to the LN equation (\ref{wxyz_heavenly}) and
yields the statement of
Theorem \ref{mainthm}. The $\square$ operator in (\ref{brac2}) is given by (\ref{boxoperator}).\koniec
\subsection{Lax pair}
\label{seclax}
Consider a pair of vector fields on $X\times \CP^1$ indexed by $A=0, 1$:
\begin{align}
L_{A} = E_{A1'} - \lambda E_{A0'} + \beta_{A}\frac{\partial}{\partial \lambda}, \quad\mbox{where}\quad
\beta_{A} = -\square W_{A} - 3\Lambda J_{A}\Psi.  
\end{align}
Using (\ref{brac2}) we see that the $E_{A0'}$ components of $[L_{0}, L_{1}]$ vanish if and only if $\beta_{A} = -\square W_{A}$,  
which holds if and only if $\Psi = 0$.  The remaining components vanish with (\ref{ASD2}),  which we have 
shown follow from $\Psi = 0$.  We have thus exhibited a Lax pair for the equation (\ref{wxyz_heavenly}).
This could be a starting point of the inverse--scattering construction analogous to \cite{MS}.
\subsection{The Ricci-flat limit}
\label{seclimit}
If we set $\Lambda = 0$,  then the metric (\ref{LN_metric}) becomes
\begin{align}
g=\frac{2}{w^2}(dwdx+dzdy+W_{xx}dz^2+W_{yy}dw^2-2W_{xy} dwdz).
\end{align}
Letting
\begin{align}
\tilde{w}=-\frac{1}{2w^2}, \quad\tilde{z}=z, \quad \tilde{x}=wx,\quad \tilde{y}=\frac{y}{w^2},\quad \Theta=W-\frac{1}{2}\frac{xy^2}{w}
\end{align}
yields the second heavenly form \cite{Pleb}
\begin{align}
g=2(d\tilde{w}d\tilde{x}+d\tilde{z}d\tilde{y}+ \Theta_{\tilde{x}\tilde{x}} d\tilde{z}^2+\Theta_{\tilde{y}\tilde{y}}  d\tilde{w}^2-2\Theta_{\tilde{x}\tilde{y}} d\tilde{w}d\tilde{z})
\end{align}
where $\Theta$ satisfies Pleba\'nski's second heavenly equation 
\begin{align}
(\p_{\tilde{x}}\p_{\tilde{w}}+\p_{\tilde{y}}\p_{\tilde{z}})(\Theta)-\{\p_{\tilde{x}}\Theta, \p_{\tilde{y}}\Theta\}_{\tilde{x}\tilde{y}}=0.
\end{align}
\par 
It is an open problem to find an analogous limiting procedure between the  Przanowski equation
\cite{PrzanKill} and the first heavenly equation.
\subsection{Examples}
\label{secexamples}
A solution to (\ref{wxyz_heavenly})  given by an arbitrary function $W = \phi f(x, z)$
yields the ASD cosmological plane wave in \cite{DT}:
\begin{align}
g = \frac{2}{\phi^2}[ dw\,dx + dy\,dz + ( \phi f_{xx} + 2\Lambda f_x) dz^2 ].
\end{align}
Another class of solutions can be found by demanding that the linear and non--linear parts
of (\ref{wxyz_heavenly}) vanish separately.  The solution is analogous to the Sparling-Tod metric \cite{ST}:
\begin{align}
W = \frac{\phi \cdot F(\frac{w}{wx+zy},\frac{z}{wx+zy})}{wx+zy}.  
\end{align}
\subsection{Reality conditions}
The normal form (\ref{LN_metric}) in Theorem \ref{mainthm} gives a complex metric. If $W$ is taken to be a real function of real coordinates $(x, y, w,  z)$ then the resulting metric has signature $(2, 2)$. While all Riemannian ASD Einstein metrics can also in principle be recovered from (\ref{LN_metric}), imposing the
reality conditions is less natural, as the coordinates are  adapted to a choice of a self--dual surface and all
such surfaces in Riemannian signature are necessarily complex. Here the original formulation of Przanowski
\cite{PrzanKill} is advantageous. The trivial solution $W=0$ together with the reality conditions
$y=\bar{z}, (x, w)\in \R^2, \Lambda>0$ is the de Sitter space.

\subsection{The Chudecki--Przanowski equation}
In our derivation of (\ref{wxyz_heavenly}) we used the coordinate freedom to set $\phi=w-\Lambda x$.
The alternative choice made in \cite{CP} exploits the coordinate freedom together with
(\ref{phiformula}) to make $\phi$ independent of $z^A$, and (using our conventions) set $\phi=x$. 
The PDE resulting from the ASD and Einstein conditions is then 
\be
x^{-1}(U_{xw}+U_{yz})-\{U_x, U_y\}_{xy}+
2x^{-1}\langle U_x, U_y\rangle_{y}+{\Lambda}{(6x)}^{-1}
U_{xx}=0.
\ee
This is the Chudecki--Przanowski equation ((2.27) in {\cite{CP}})  for a potential $U(x, y, w, z)$.
\section{Outlook: Celestial symmetries with $\Lambda\neq 0$}
The resurgence of interest in heavenly equations and their integrability comes from
the appearance of their infinite--dimensional symmetry algebras $Lw_{1+\infty}$ in the context of celestial holography \cite{AMS}. In the ASD Ricci--flat case these symmetries can be 
understood as coming from a recursion
operator on space--time, and its Penrose transform acting on cohomology classes in the twistor space
\cite{DM02} (see also \cite{Miller}). The presence of non--zero cosmological constant breaks down the SDiff$(\C^2)$ symmetries
of the heavenly equations. This can already be seen in the LN equation (\ref{wxyz_heavenly}) as
the underlying bracket deforms the Poisson bracket by a Wronskian Lie algebra of
Diff$(\C)$.  The celestial symmetries in the presence of $\Lambda$, and their twistor counterparts  have been studied in \cite{TZ} and \cite{BBHKMS}. 

 Our Lax pair (\ref{laxpairintro}) leads to a notion of a recursion operator: Let
$\psi(x, y, w, z, \lambda)=\sum_{n=0}^{\infty} \lambda^{-n}\phi_n$ be a twistor function holomorphic around $\lambda=\infty$ where
$\phi_n, n=0, 1, \dots$ are functions on $X$. The twistor condition $L_A(\psi)=0$ leads to a recursion operator ${\mathcal R}(\phi_n)=\phi_{n+1}$ defined in terms of the vector fields (\ref{Evectors})
\be
\label{recursionrel}
E_{A0'}(\phi_{n+1})=E_{A1'}(\phi_n)-(n-1)\beta_A(\phi_{n-1}).
\ee
We can start the recursion  letting $\phi_0$ be the coordinate functions $w$ or $z$ which gives
\[
 w\rightarrow y\rightarrow W_x+4\Lambda\phi^{-1}W\rightarrow\cdots\\, \qquad z\rightarrow -x\rightarrow W_y\rightarrow\cdots
\]
with the higher terms determined by the linear PDEs (\ref{recursionrel}).
In the case $\Lambda=0$ the recursion operator acts on solutions to the background--coupled wave 
equation \cite{DM02}.
In the presence of $\Lambda$ the functions $\phi_n$ are twisted by sections of powers of a non--trivial line--bundle (see \cite{Hoeg} for a description in the context of the Przanowski equation). It would  be interesting to see 
if this recursion procedure can be used to reconstruct the 
$\Lambda$--celestial symmetries and the associated hierarchy in the spirit of \cite{bogdanov}.

\end{document}